\documentclass[10pt,conference]{IEEEtran}
\usepackage{xcolor,soul,framed} 
\usepackage[noadjust]{cite}

\colorlet{shadecolor}{yellow}

\usepackage[pdftex]{graphicx}
\graphicspath{{../pdf/}{../jpeg/}}
\DeclareGraphicsExtensions{.pdf,.jpeg,.png}
\usepackage[caption=false, font=footnotesize]{subfig}
\usepackage{amssymb,mathtools}
\usepackage{array}
\usepackage{mdwmath}
\usepackage{mdwtab}
\usepackage{eqparbox}
\usepackage{url}
\usepackage[ruled,vlined,linesnumbered]{algorithm2e}
\usepackage{enumitem} 
\usepackage[mathcal]{euscript} 
\usepackage{tabularx} 
\usepackage{multirow} 
\usepackage{booktabs}
\usepackage{makecell}
\usepackage{aligned-overset}
\usepackage{url}
\usepackage{multicol}

\usepackage{balance}
\usepackage{bbm}
\usepackage{amsmath}
\usepackage{amsthm} 
\usepackage{pifont}

\graphicspath{{./Figures/}}

\newtheorem{Theorem}{Theorem}

\renewcommand{\vec}[1]{\boldsymbol{\mathrm{#1}}}

\newcommand{\IRS}{\mathtt{IRS}}

\newcommand{\user}{\mathtt{U}}

\newcommand{\ap}{\mathtt{AP}}

\newcommand{\pl}{\ell}

\usepackage[hidelinks]{hyperref}
\usepackage{xcolor}
\usepackage[margin=0.8in]{geometry}

\begin{document}
\title{Dual-Tier IRS-Assisted Mid-Band 6G Mobile Networks: Robust Beamforming and User Association}
\author{
  \IEEEauthorblockN{Muddasir Rahim and Soumaya Cherkaoui}
  \IEEEauthorblockA{Department of Computer Engineering and Software Engineering, Polytechnique Montreal, Canada} \IEEEauthorblockA{Emails: muddasir.rahim@polymtl.ca, soumaya.cherkaoui@polymtl.ca}
}
\maketitle

\begin{abstract}
The rapid growth of Internet of Things (IoT) applications necessitates robust resource allocation in future sixth-generation (6G) networks, particularly at the upper mid-band (7–15 GHz, FR3). This paper presents a novel intelligent reconfigurable surface (IRS)-assisted framework combining terrestrial IRS (TIRS) and aerial IRS (AIRS) mounted on low-altitude platform stations, to ensure reliable connectivity under severe line-of-sight (LoS) blockages. Distinguishing itself from prior work restricted to terrestrial IRS and mmWave/THz bands, this work targets the FR3 spectrum, the so-called "Golden Band" for 6G. The joint beamforming and user association (JBUA) problem is formulated as a mixed-integer nonlinear program (MINLP), solved through problem decomposition, zero-forcing beamforming, and a stable matching algorithm. Comprehensive simulations show our method approaches exhaustive search performance with significantly lower complexity, outperforming existing greedy and random baselines. These results provide a scalable blueprint for real-world 6G deployments, supporting massive IoT connectivity in challenging environments.
\end{abstract}
\begin{IEEEkeywords}
Internet of Things (IoT), intelligent reconfigurable surface (IRS), low-altitude platform station (LAPS), matching theory, sixth-generation (6G), user association   
\end{IEEEkeywords}
\section{INTRODUCTION}\label{introduction}
\IEEEPARstart{T}{he} rapid expansion of Internet of Things (IoT) applications has introduced diverse and demanding requirements for wireless communication networks, such as ultra-high data rates, low latency, and high reliability. To meet these growing needs and support emerging data-intensive applications such as virtual/augmented reality (VR/AR) and ultra-high-definition video streaming, sixth-generation (6G) networks must provide more capacity. Furthermore, for commercial viability, 6G must also accommodate a wide range of new use cases and functionalities while ensuring reliable connectivity for a massive number of heterogeneous devices. 

To meet these requirements, one solution is to use higher-frequency bands with wider bandwidths, thereby enabling higher data rates. Over the past decade, the millimeter-wave (mmWave) band (24 GHz and above) was considered the natural next step for wireless communication due to its abundant spectrum availability. As a result, 5G was co-designed to operate over two frequency ranges (FRs): FR1, spanning 0.4–7 GHz, and FR2, covering 24–71 GHz~\cite{bjornson2025enabling}. However, the high susceptibility of mmWave bands to blockage results in intermittent coverage, posing a significant challenge that necessitates costly dense small-cell deployments to ensure reliable connectivity. Although the research community is actively exploring even higher-frequency bands, such as the terahertz (THz) and sub-THz bands, these are expected to inherit and amplify the challenges associated with mmWave frequencies~\cite{rappaport2019wireless}. Therefore, there is growing interest from both academia and industry in identifying suitable frequency bands, with particular attention to the upper mid-band (UMB) (7–15 GHz), also known as FR3~\cite{nokia2024}. Several key factors are driving the selection of FR3. Firstly, UMB offers significantly larger bandwidth than FR1. Moreover, it exhibits more favorable propagation characteristics than FR2, resulting in substantially better coverage. Therefore, the FR3 is now referred to as the "Golden Band" for 6G, owing to its promising balance between bandwidth availability and coverage~\cite{cui20236g}. 
Yet translating FR3’s spectral advantages into reliable coverage at scale requires addressing non-negligible propagation losses and frequent LoS blockages in cluttered environments; constraints that motivate environment-shaping solutions beyond densified small cells.

To overcome propagation losses and enhance coverage, several solutions have been proposed, including ultra-massive multi-input-multiple-output (UM-MIMO) and intelligent reflecting surfaces (IRSs). However, UM-MIMO systems typically incur high power consumption and hardware complexity. The large number of antenna elements and associated radio frequency (RF) chains significantly increase energy usage, signal processing demands, and implementation costs, making their large-scale deployment challenging. In contrast, IRSs provide a low-cost, energy-efficient alternative by enabling reconfigurable radio environments through large arrays of passive reflecting elements~\cite{huang2019reconfigurable}. Most existing IRS-assisted networks primarily focus on terrestrial deployments, where IRSs are mounted on buildings to enhance coverage. However, in dense urban environments where line-of-sight (LoS) links are frequently obstructed, relying solely on terrestrial IRSs (TIRSs) can lead to coverage blind spots~\cite{rahim2023joint,rahim2024multi}. To extend coverage and eliminate these blind spots, a promising solution is to integrate low-altitude platform station (LAPS)-mounted IRSs, also known as aerial IRSs (AIRSs), into the network~\cite{you2022enabling}. These AIRSs offer distinct advantages in terms of coverage extension, service continuity, and rapid deployment. Nevertheless, the introduction of such a multi-tier, heterogeneous architecture brings significant challenges in resource allocation (RA). Thus, effective RA is critical in 6G networks. The FR3 remains unexplored, mainly in the context of IRSs, therefore, IRS-assisted FR3 band communication represents a vital research direction with significant potential to impact real-world 6G network deployments.
 
\subsection{Related Works}
The 6G standardization begins in 2025 and
will be completed in 2028-2029, allowing commercial 6G networks to open in 2029-2030. These networks will likely operate in FR3. In this context, many researchers have started to study this band, including its achievable communication performance and applications. Meanwhile, only a few papers in the literature deal with MIMO in FR3~\cite{heath2024beamsharing,bjornson2025enabling,tian2024mid}. In~\cite{heath2024beamsharing}, the authors proposed a user-pairing method and used beam sharing for near-field and far-field users in a MIMO setup. Moreover, in~\cite{bjornson2025enabling}, the authors introduced a term gigantic MIMO (gMIMO) (i.e., at least 256 antenna ports), indicating an even larger number of antennas than in UM-MIMO. This work explored the potential of gMIMO in FR3. The authors in~\cite{tian2024mid} integrated extra large-scale MIMO (XL-MIMO) in mid-band communications. The aim of this paper was to propose an analytical model and analyze key metrics of wireless networks, including spectral efficiency (SE) and outage probability (OP).

In~\cite{mohsan2023irs}, the authors presented a comprehensive survey of IRS-aided unmanned aerial vehicle (UAV) communications, summarizing architectures, channel considerations, and design objectives. Authors in~\cite{abdalla2020uavs} focused on the promise of integrating IRS with UAV platforms for future cellular systems, outlining multiple use cases, key challenges, and open research directions. They discussed related works in the domain of spectrum sharing, physical layer security, giant-site access, and enhanced coverage. Moreover, in~\cite{bui2025joint}, the authors proposed a joint optimization framework for transmit power allocation and IRS phase-shift configuration. This framework aimed to maximize energy efficiency while maintaining reliable communication for all users. IRSs were deployed on buildings to enhance the coverage and provide a reflective link between the UAVs and users. Meanwhile, the authors in~\cite{farre2025dynamic} proposed integrating non-terrestrial networks (NTNs) with IRSs into existing terrestrial networks (TNs). This work aimed to enhance connectivity and capacity in crowded environments by optimizing coverage and interference mitigation. IRS-assisted UAV communication and AIRS-assisted terrestrial communication are two new schemes proposed by the authors of~\cite{you2022enabling} to jointly utilize IRS and UAV for next-generation wireless networks with integrated terrestrial and aerial communications.

The FR3 bands are the least explored in the IRS context. To the best of the authors' knowledge, the only article in the literature that demonstrates the significance of IRSs in the FR3 is~\cite{kara2024reconfigurable}. This work explored different scenarios under which the IRS can provide significant benefits and optimal strategies for deploying IRSs to enhance the performance of the FR3 communication system. Therefore, realizing the full potential of integrating AIRS and TIRS in the FR3 for 6G requires further research. In particular, IRS studies overwhelmingly emphasize FR2 (mmWave/THz) and single-tier terrestrial deployments, leaving multi-tier FR3 operation comparatively underexplored, despite FR3’s emerging role in 6G and its distinct blockage and coverage trade-offs.
\subsection{Contributions}\label{contributions}
The aforementioned studies have primarily examined IRS-assisted communication in terrestrial networks and have largely focused on mmWave/THz bands. This leaves a gap regarding FR3 operation and multi-tier architectures. Motivated by this gap, we investigate the integration of TIRSs with AIRSs for FR3 communications. The following are the main contributions of this paper:

\begin{itemize}
    \item To the best of our knowledge, this is the first study that analyzes the impact of TIRSs and AIRS in FR3 bands. We formulate a joint beamforming and user association (JBUA) problem to maximize the downlink sum rate.
    \item The JBUA problem is a mixed-integer nonlinear program (MINLP) and is challenging to solve directly. Therefore, we decompose it into two subproblems: beamforming matrix optimization and user association. Then, solve each subproblem efficiently to construct a feasible suboptimal solution to the original problem.
    \item We employ zero-forcing (ZF) beamforming at the access point (AP) and a user-proposing deferred-acceptance matching algorithm for the association sub-problem, providing a zero-interference precoder and a stable one-to-one user–IRS matching mechanism.
    \item The proposed JBUA achieves sum-rate performance approaching that of exhaustive search (ES) while clearly outperforming the random search (RS) and greedy search (GS) schemes.
 
\end{itemize}

\section{System Model} \label{model}
A multi-layer IRS-assisted network consisting of a single AP equipped with $N$ antennas serving $K$ single-antenna users is shown in Fig.~\ref{m1}. This layout explicitly targets FR3 propagation with frequent blockages, motivating hierarchical placement to sustain controllable reflected paths.
To support reliable downlink communications in challenging environments, a total of $L$ IRSs are deployed, consisting of TIRSs and AIRSs. The corresponding IRS set is denoted as $\mathcal{L} = \{\IRS_1, \IRS_2, \ldots, \IRS_l, \ldots, \IRS_L\}$. Each IRS is equipped with $M = M_y M_z$ reflecting elements arranged in a uniform planar array, where $M_y$ and $M_z$ represent the number of elements along the horizontal and vertical axes, respectively. Due to severe signal blockages in the FR3 bands, all direct AP-to-user links are assumed to be unavailable. This assumption is reasonable when obstacles in a dense urban environment severely impede the direct link.
\begin{figure}[!t]
     \centering
\includegraphics[width=0.88\linewidth]{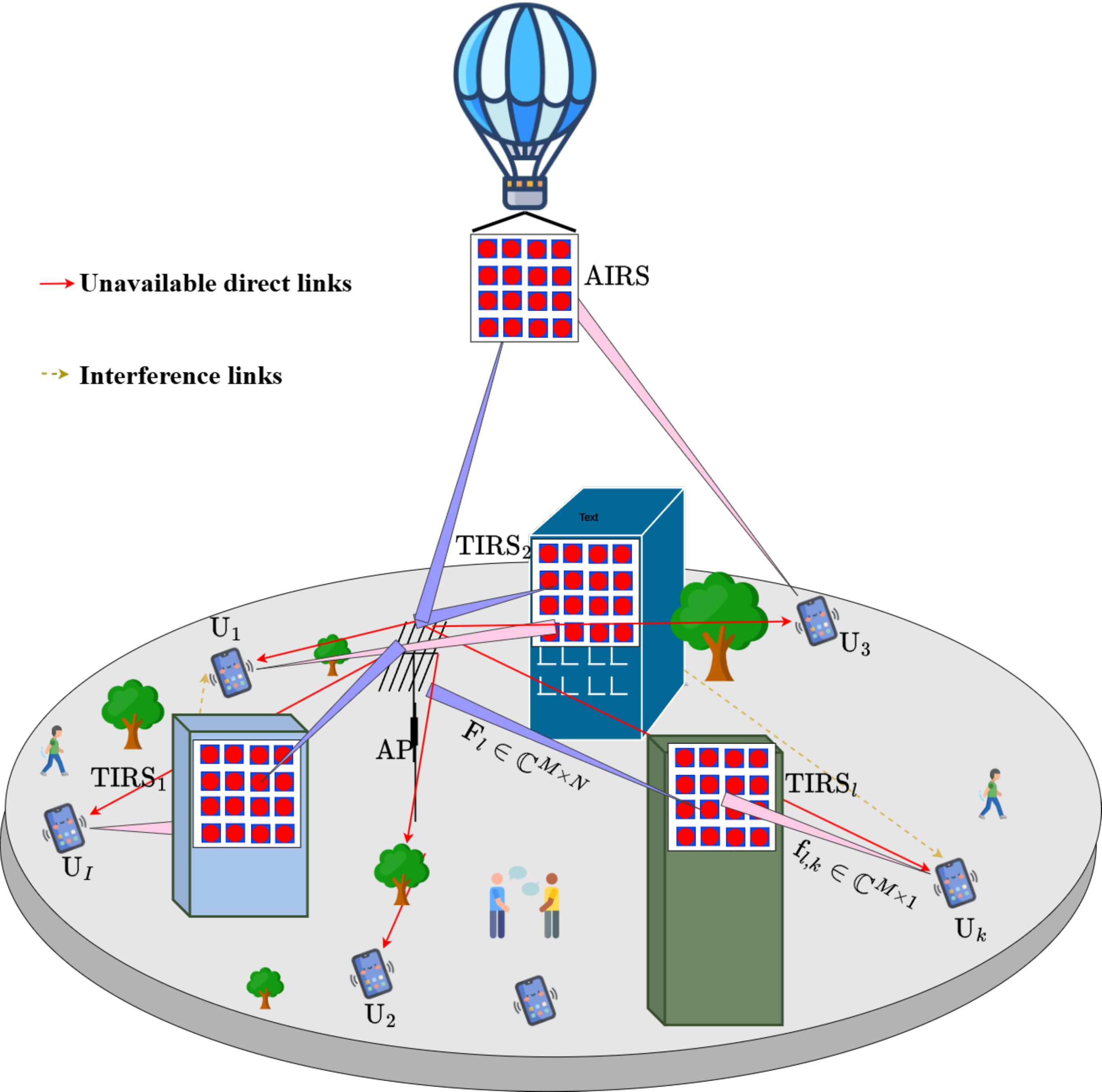}
    \caption{System model of IRS-assisted with LAPS and terrestrial layers.
     }
     \label{m1}   
\end{figure}
In the Cartesian coordinate system, the center locations of the AP, $k^{th}$ user ($\user_{k}$), and $l^{th}$ IRS ($\IRS_{l}$) are represented as ${\mathbf p_{a} = (x_{a}, y_a, z_{a})}$, ${\mathbf p_{k} = (x_{k}, y_{k}, z_{k})}$, and ${\mathbf p_{l} = (x_{l}, y_{l}, z_{l})}$, respectively. Furthermore, the distance from the AP to the center of $\IRS_{l}$ is written as 
\begin{align}
    d_{a, l} = 
    \sqrt{(x_{a}-x_{l})^2+(y_{a}-y_{l})^2+(z_{a}-z_{l})^2}.
\end{align}
Similarly, the distance from $\IRS_{l}$ to $\user_{k}$ is expressed as
\begin{align}
     d_{l, k} = 
    \sqrt{(x_{l}-x_{k})^2+(y_{l}-y_{k})^2+(z_{l}-z_{k})^2}.
\end{align}
\subsection{Channel Modeling and Data Rate Analysis} 
Considering $\vec{F}_l \in \mathbb{C}^{M \times N}$ denotes the channel matrix from the AP to ($\IRS_l$). The channel from the $n^{th}$ antenna of the AP ($\ap_n$) to the $m^{th}$ element of the $\IRS_l$ ($\IRS_{l,m}$) is then given as
 \begin{align}\label{eq1}
    f_{n,(l,m)} = \sqrt{\pl_{n,(l,m)}} \, e^{-j \omega d_{a,l}},
 \end{align}
where $\pl_{n,(l,m)}$ is the pathloss from $\ap_n$ to $\IRS_{l,m}$, $d_{a,l}$ is the distance between the AP and $\IRS_{l}$, and $\omega = \frac{2\pi f}{c}$ is the wave number at frequency $f$. Let us consider the channel between $\IRS_{l}$ and $\user_k$ is $\vec{f}_{l,k}\in \mathbb{C}^{M \times 1}$ and the channel between $\IRS_{l_m}$ and $\user_k$ is given as
\begin{align}
    f_{(l,m),k} = \sqrt{\pl_{(l,m),k}} \, e^{-j \omega d_{l,k}},
\end{align}
where $\pl_{(l,m),k}$ is the pathloss and $d_{l,k}$ is the distance from  $\IRS_{l,m}$ to $\user_k$. Furthermore, the effective cascaded channel vector from the AP to $\user_k$ via $\IRS_{l}$ can be written as
\begin{align}
\vec{h}_{l,k} = \vec{F}_l^{\sf H} \, \vec{\Theta}_l \, \vec{f}_{l,k},
\end{align}
where $\vec{\Theta}_l$ is the IRS reflection matrix, which can be written as
\begin{equation}
    \vec{\Theta}_l = \mathrm{diag}([\kappa_{l,1} e^{j \theta_{l,1}}, \ldots,\kappa_{l,m} e^{j \theta_{l,m}}, \ldots, \kappa_{l,M} e^{j \theta_{l,M}}]),
\end{equation}
where $\kappa_{l,m} \in [0,1]$ and $\theta_{l,m} \in [0, 2\pi)$ represent amplitude coefficients and phase shifts. The channel matrix can be written as 
\begin{equation}
    \vec{H} = [\vec{h}_{1},..,\vec{h}_{k}, \ldots, \vec{h}_{K}]^{\sf H} \in \mathbb{C}^{K \times N}.
\end{equation}
We consider a precoding matrix $\widehat{\vec{W}} \in \mathbb{C}^{N \times K}$ with columns $\widehat{\vec{w}}_k \in \mathbb{C}^{N \times 1}$ denoting unit-norm beam direction for $\user_k$, which can be expressed as $\widehat{\vec{w}}_k = \sqrt{p_k}{\vec{w}}_k,$ where $p_k$ and ${\vec{w}}_k$ are the transmit power scaling factor and the beamforming vector for $\user_k$, respectively. For the given power budget $\vec{P}_{\ap}$, the power constraint can be expressed as
\begin{align}
  \textstyle \sum_{k=1}^K \Vert\widehat{\vec{w}}_k\Vert^2 = \mathsf{Tr} \{ \vec{W}\vec{P} \vec{W}^{\sf H}\} \leq \vec{P}_{\ap},
\end{align}
where $\vec{P} = \mathrm{diag}(p_1,\ldots,p_k,\ldots,p_K)$.
Let $\vec{x} \in \mathbb{C}^{N \times 1}$ be the transmit signal vector from the AP to users, where
$s_k$ is the data symbol intended for $\user_k$, then, the transmitted signal vector from the AP can be written as 
\begin{align} \label{transmit_signal}
	\textstyle\vec{x}=\sum_{{k=1 }}^K\widehat{\vec{w}}_k s_k=\sum_{{k=1  }}^K\sqrt{p_k}{\vec{w}}_k s_k.
\end{align}
The received signals at $\user_k$ through $\IRS_l$ can be expressed as 
\begin{equation}\label{recived_1}
    y_{l,k} = \underbrace{\sqrt{p_{k}}(\mathbf{h}_{l,k})^H {\vec{w}}_k s_k}_{\text{desired signal}} + \underbrace{\sum_{i \neq k}^{K}\sum_{l=1}^{L}\sqrt{p_{j}}(\mathbf{h}_{l,k})^H {\vec{w}}_i s_i}_{\text{interference}} + \underbrace{n_k}_{\text{noise}},
\end{equation}
where $n_k$ denotes the additive white Gaussian noise (AWGN) with zero mean and variance $\sigma^2$. The first term in \eqref{recived_1} represents the signal intended for $\user_k$ reflected by the assigned $\IRS_l$. The second term captures multiuser interference, i.e., the superposition of signals intended for all other users and reflected by all IRSs, which impinges on $\user_{k}$. Furthermore, the signal-to-interference-and-noise ratio (SINR) for $\user_k$ can be written as
\begin{equation}\label{sinr}
    \text{SINR}_{l,k} = \frac{p_{k}|(\mathbf{h}_{l,k})^H {\vec{w}}_k|^2}{\sum_{i \neq k}^{K}\sum_{l=1}^{L}p_j|(\mathbf{h}_{l,k})^H {\vec{w}}_i|^2 + \sigma^2}.
\end{equation}
 Based on the above SINR, the achievable data rate for $\user_k$ can be calculated as
\begin{align}
    R_{l,k} &= \log_2(1 + \text{SINR}_{l,k})\nonumber \\ &=\log_2(1 + \frac{p_{k}|(\mathbf{h}_{l,k})^H {\vec{w}}_k|^2}{\sum_{i \neq k}^{K}\sum_{l=1}^{L}p_j|(\mathbf{h}_{l,k})^H {\vec{w}}_i|^2 + \sigma^2}).
\end{align}
\subsection{Optimization Problem Formulation}\label{OP}
The primary aim of this study is to maximize the network sum rate by optimizing the beamforming matrix and user association. Let $[\vec{\Upsilon}]_{l,k}$ be the user association matrix and $\vec{{W}}$ be the beamforming matrix at the AP. The optimization problem can be stated as follows:
\begin{subequations}\label{eq_opt_prob}
\begin{alignat}{2}
& \vec{P} \text{ : } \underset{ \vec{{W}}\, \vec{\Upsilon}}{\text{ maximize}}
&\quad
&\sum_{k=1}^K R_{l,k} \label{eq_optProb}\\ 
&\quad\text{subject to} 
 &&p_k\geq0, \quad\forall k\in K, \label{eq_constraint1}\\
 &&&
\textstyle  \sum_{k=1}^K p_k\Vert{w}_{k}\Vert^2\leq P_{\ap},\label{eq_constraint2}\\
 &&& \textstyle  \sum_{l=1}^L [\vec{\Upsilon}]_{l,k}=1, \quad\forall k\in K , \label{eq_constraint3}\\
&&&\textstyle \sum_{k=1}^K [\vec{\Upsilon}]_{l,k}=1, \quad\forall l\in L , \label{eq_constraint4}\\
&&&  [\vec{\Upsilon}]_{l,k} \in \{0,1\},\quad \forall k,l,\label{eq_constraint5}
\end{alignat}
\end{subequations}
where~\eqref{eq_constraint1} guarantees nonnegative transmit powers and \eqref{eq_constraint2} limits the total power to the AP power budget $P_{\ap}$. Moreover, constraint~\eqref{eq_constraint3} restricts each IRS to a single user, constraint~\eqref{eq_constraint4} assigns each user to exactly one IRS, and constraint~\eqref{eq_constraint5} enforces binary assignment to complete the one-to-one matching.
\begin{Theorem}\label{T_NP}
    The optimization problem~\eqref{eq_opt_prob} is a MINLP and a non-deterministic polynomial-time hard (NP-hard) problem.
\end{Theorem}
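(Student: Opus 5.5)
The statement has two parts, and I will treat them separately. The MINLP part is a matter of reading off the structure of \eqref{eq_opt_prob}. The association variables $[\vec{\Upsilon}]_{l,k}$ are binary by \eqref{eq_constraint5}. The beamforming vectors $\vec{w}_k$ and powers $p_k$ are continuous. The objective $\sum_k R_{l,k}$ is a sum of logarithms of SINR ratios, and in each SINR the variables $p_k$ and $\vec{w}_k$ appear as $p_k|\vec{h}_{l,k}^{\sf H}\vec{w}_k|^2$ in the numerator and in the interference sum of the denominator. The objective is therefore nonconvex and nonlinear in the continuous variables. The association also enters the objective multiplicatively, through which cascaded channel $\vec{h}_{l,k}$ is active for user $k$; to make this explicit I will write $R_k=\sum_l[\vec{\Upsilon}]_{l,k}R_{l,k}$. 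A problem with integer variables, continuous variables, and a nonlinear nonconvex objective is by definition an MINLP.

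For NP-hardness, I plan to go through a known NP-hard special case rather than build a reduction from scratch. The cleanest choice is to fix the association $\vec{\Upsilon}$ to any feasible one-to-one matching, for example the identity with $L=K$. What remains is the continuous problem of maximizing the weighted sum rate over $\{p_k,\vec{w}_k\}$ subject to the total power constraint \eqref{eq_constraint2}, for the effective MISO broadcast channel $\vec{H}$ with treat-interference-as-noise receivers. By the classical result of Luo and Zhang (2008), sum-rate maximization over linear transmit beamformers in a multiuser MISO downlink is NP-hard once $N\ge 2$ and $K$ is part of the input. They prove this by a reduction from maximum independent set, choosing channel gains so that the optimal power and beam support encodes an independent set. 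The remaining step is embeddability: every instance of their problem must be realizable as an instance of \eqref{eq_opt_prob}. I would argue that, for a fixed association, the only problem data entering \eqref{eq_opt_prob} are the effective channels $\vec{h}_{l,k}=\vec{F}_l^{\sf H}\vec{\Theta}_l\vec{f}_{l,k}$. For instance, take $M\ge N$, let $\vec{F}_l$ have full column rank, and set $\vec{\Theta}_l=\vec{I}$. Then the map $\vec{f}_{l,k}\mapsto \vec{h}_{l,k}$ is onto $\mathbb{C}^N$, so arbitrary target channel vectors $\vec{h}_k$ can be produced. Hence \eqref{eq_opt_prob} contains the NP-hard MISO sum-rate problem as a special case, and the general problem is at least as hard.

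As a complementary argument, the association part alone, with the beamformers held fixed, is an assignment problem with coupled (non-separable) rates, since the interference in \eqref{sinr} depends on all other users. I would remark that this coupling prevents a Hungarian-algorithm solution. I would not rely on this for hardness, however; the continuous special case already suffices.

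The main obstacle is the embedding step. The paper's channel model in \eqref{eq1} uses a common phase $e^{-j\omega d_{a,l}}$ and geometric path losses, so it is not obvious that arbitrary channel vectors are attainable. I would handle this in one of two ways. One is to argue that hardness of a problem class is established by exhibiting hard instances, and the generic matrices $\vec{F}_l,\vec{f}_{l,k}$ of the general formulation $\vec{h}_{l,k}=\vec{F}_l^{\sf H}\vec{\Theta}_l\vec{f}_{l,k}$ allow them. The other, if necessary, is to check that the Luo–Zhang hard instances only require real nonnegative gain patterns, which can be matched by choosing positions and path losses.
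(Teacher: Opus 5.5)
\textbf{Main gap: the cited hardness result does not cover your reduced problem.} Your MINLP argument is fine. Your route also differs from the paper's: the paper fixes $\vec{W}$ and likens the association to link scheduling, whereas you fix the association. But the step carrying all the weight is not supported. The Luo--Zhang hardness result, with its independent-set reduction, is for SISO power control in an \emph{interference channel}. There every link has its own transmitter and power budget, and the cross gains $g_{kj}$ are fixed data encoding the graph. The multi-antenna continuation of that line of work (Liu, Dai and Luo) likewise treats the MISO \emph{interference} channel, with one transmitter per link.

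After fixing $\vec{\Upsilon}$ you instead have a single-AP broadcast channel with one sum-power constraint. The AP chooses all beam directions jointly, so the cross gains $|\vec{h}_k^{\sf H}\vec{w}_j|^2$ depend on the decision variables rather than being data. Whenever the stacked channel has full row rank, zero-forcing removes all interference. The obvious vector analogue of the gadget (adjacent users nearly parallel, non-adjacent users orthogonal) cannot exist, because the angle between channel directions obeys a triangle inequality along any path $u$--$v$--$w$. So the interference-channel hard instances do not embed, and realizing arbitrary $\vec{h}_k$, which you call the main obstacle, is the easy part. You would need a hardness result specifically for single-cell MISO broadcast sum-rate with linear precoding under a total power constraint, or your own reduction for that problem.

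\textbf{Smaller point: fixing the association.} ``Fixing'' $\vec{\Upsilon}$ is not itself a reduction; you must build instances in which it is forced. For example, take $L=K$ and $\vec{f}_{l,k}=\vec{0}$ for $l\neq k$. This also removes the cross-IRS interference paths in \eqref{sinr}, so the reduced problem really is the standard broadcast channel.

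\textbf{Your complementary remark is backwards.} In \eqref{sinr} the interference term sums over all IRSs and depends only on $\{p_i,\vec{w}_i\}_{i\neq k}$, not on $\vec{\Upsilon}$. With the beams held fixed, each $R_{l,k}$ is therefore a constant. Then $\vec{P2}$ is a linear assignment problem over permutation matrices, solvable by the Hungarian algorithm in $O(K^3)$. That fixed-$\vec{W}$ association problem is exactly what the paper's proof relies on, so the paper's own argument does not establish hardness under its rate model either. Your instinct to place the hardness in the continuous part is the right one. It still needs a source problem that genuinely matches a single transmitter with a sum-power constraint.
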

\begin{IEEEproof}
In the context of computational complexity theory, we use three steps to prove that problem~\eqref{eq_opt_prob} is NP-hard. A well-known decision problem $Q$ that has already been shown to be NP-complete is the first one we select. Next, we create a polynomial-time reduction from any instance of $Q$ to an instance of problem~\eqref{eq_opt_prob}. Lastly, we demonstrate that the original problem's objective value is maintained in the modified case. Let us examine the user association problem scenario when the beamforming matrix is provided. The sum-rate maximization problem in~\eqref{eq_opt_prob} can be expressed as
\begin{align}
& \underset{\vec{\Upsilon} }{\text{maximize}}
\quad \sum_{k=1}^K R_{l,k},\nonumber\\&\text{subject to} \quad \eqref{eq_constraint1},\eqref{eq_constraint5},
\end{align}
which is comparable to the well-known NP-hard problem of link scheduling~\cite{mlika2018user}.
The proof of Theorem \ref{T_NP} is thus completed.
\end{IEEEproof}
\section{Proposed Resource Allocation Solutions}\label{proposed}
In this section, we first divide the original optimization problem into two sub-problems, solve each sub-problem, and then combine their solutions to obtain a suboptimal solution to the original optimization problem $\vec{P}$ in \eqref{eq_opt_prob}. By Theorem \ref{T_NP}, the optimization problem $\vec{P}$ is a MINLP and NP-hard problem. Furthermore, the objective of $\vec{P}$ is to maximize the sum rate, which is a nonlinear function, while satisfying multiple constraints. Moreover, the problem structure aligns with a variant multiple-knapsack (VMK) problem. The VMK problem is a generalization of the classical multiple-knapsack problem, in which there are multiple knapsacks and a set of items with associated weights and values. The goal is to select a subset of the items to place in the knapsacks in such a way that the total value of the items is maximized, subject to the constraint that the total weight of the items in each knapsack does not exceed its capacity. Therefore, we use this VMK problem as a tool for decomposing optimization problems. Optimization problem decomposition is the process of breaking down a large, complex optimization problem into smaller, more manageable subproblems that can be solved independently and then combined to obtain a solution to the original problem. This observation motivates a decomposition strategy: (i) beamforming-matrix optimization with fixed association and (ii) user–IRS association with fixed beams. By alternately and optimally solving these two subproblems, we obtain a computationally tractable, suboptimal solution to the original NP-hard formulation.
\subsection{Zero Forcing Beamforming}
 First, we reformulate the optimization problem by fixing the user-IRS association and then optimizing the beamforming matrix under the given association constraints as 
\begin{subequations}\label{P1}
\begin{alignat}{2}
& \vec{P1} \text{ : } \underset{ \vec{{W}}}{\text{ maximize}}
&\quad
&\sum_{k=1}^K R_{l,k} \label{P1.1}\\
&\quad\quad\text{subject to} 
 &&\eqref{eq_constraint1},\eqref{eq_constraint2}.\label{P1.2}
\end{alignat}
\end{subequations}
We propose a ZF beamforming to obtain $\vec{W}$ for the sub-problem $\vec{P1}$ in \eqref{P1}. ZF beamforming is a linear precoding technique widely used in multi-user systems. With ZF, the multiuser interference is eliminated, but it neglects the effect of noise. The fundamental concept of ZF is to use the pseudo-inverse of the channel matrix as the beamforming matrix, which can be expressed as
\begin{align} \label{w_zf}
		& \vec{W}=\vec{H}^\dagger = \vec{H}^H(\vec{H}^H\vec{H})^{-1},
		\end{align}
where $\vec{H}^\dagger$ denotes the pseudo-inverse of channel $\vec{H}$. Furthermore, from the definition of the ZF beamforming, we have
\begin{align} \label{WH_zf}
		& \vec{H}\vec{W}=\vec{H}\vec{H}^H(\vec{H}^H\vec{H})^{-1} =\vec{I}_K.
\end{align}
The interference term vanishes with ZF beamforming, thus, the received signal and SINR at $\user_k$ via $\IRS_l$ can be expressed as, respectively 
\begin{equation}\label{resived_zf}
    y_{l,k}^{ZF}= \underbrace{\sqrt{p_{k}}(\mathbf{h}_{l,k})^H {\vec{w}}_k s_k}_{\text{desired signal}}+ \underbrace{n_k}_{\text{noise}},
\end{equation}
\begin{equation}\label{sinr_zf}
    \text{SINR}_{l,k}^{ZF} = \frac{p_{k}|(\mathbf{h}_{l,k})^H {\vec{w}}_k|^2}{ \sigma^2}.
\end{equation}
\subsection{User-IRS Association}
Then, given the beamforming matrix obtained in subproblem $\vec{P1}$, we formulate the user-IRS association as
\begin{subequations}\label{P2}
\begin{alignat}{2}
& \vec{P2} \text{ : } \underset{ \vec{\Upsilon}}{\text{ maximize}}
&\quad
&\sum_{k=1}^K R_{l,k} \label{P2.1}\\
&\quad\quad\text{subject to} 
 &&\eqref{eq_constraint3}-\eqref{eq_constraint5}.\label{P2.2}
\end{alignat}
\end{subequations}

\begin{algorithm}[!t]
  \caption{Proposed User Association Algorithm for Problem~\eqref{P2}}
  \label{algo:1}
  \DontPrintSemicolon{
  \KwIn{ $K$, $L$, $[\vec{\Lambda}]_{\user_k,\IRS_l}$, $[\vec{\Lambda}]_{\IRS_l,\user_k}$, set of unmatched users $\Pi$, $[\vec{\Upsilon}]_{l,k}$}
 {\bf Initialize }$[\vec{\Lambda}]_{\user_k,\IRS_l} = \emptyset$,
 $[\vec{\Lambda}]_{\IRS_l,\user_k} = \emptyset$, $[\vec{\Upsilon}]_{l,k} = \emptyset$;\;
\underline{\textbf{Phase 1: Preference Matrix:}}\\
\For {$k\in K$}{
$R_{l,k}, \quad \forall\hspace{2mm} l \in L$ and store in $[\vec{R}]_{l,k}$\;
}
$[\vec{R}]_{k,l}= ([\vec{R}]_{l,k})^T$\;
[$[\vec{R}]_{l,k}$,$[\vec{\Lambda}]_{\IRS_l,\user_k}$] = sort ($[\vec{R}]_{l,k}$,2,descend)\;
[$[\vec{R}]_{k,l}$,$[\vec{\Lambda}]_{\user_k,\IRS_l}$] = sort ($[\vec{R}]_{k,l}$,2,descend)\;
\underline{\textbf{Phase 2: User-IRS Association:}}\\
\While {(either $\Pi$ $\neq \emptyset$ or users not rejected by all IRSs)}{
\For {$\user_{k'}\in \Pi$}{
make a proposal to the IRS with the highest preference in $[\vec{\Lambda}]_{\user_k,\IRS_l}$\;
set element of $[\vec{\Upsilon}]_{l,k} = 1$\;
\For {$\IRS_l\in L$}{
{\bf if} {($\IRS_l \notin [\vec{\Upsilon}]_{l,k}$ )}\;
{
$[\vec{\Upsilon}]_{l,k} \gets [\vec{\Upsilon}]_{l,k} \cup (\user_{k'},\IRS_l) $\;
}
{\bf else if} {($R_{l,k'}>R_{l,k}$)}\;{
$[\vec{\Upsilon}]_{l,k} \gets [\vec{Psi}]_{l,k} \cup (\user_{k'},\IRS_l)$ \;
{\bf else}\;
$[\vec{\Upsilon}]_{l,k} \gets [\vec{\Upsilon}]_{l,k} $ \;
}
}
  }
{\bf Output: }Association matrix $\vec{\Upsilon}^\star$
}}
\end{algorithm}
We deploy multiple TIRSs and a single AIRS to assist the communication links. The user-IRS association $\vec{P2}$ is modeled as a one-to-one matching problem in which each user can be matched to at most one IRS in a slot. Algorithm \ref{algo:1} implements a user-proposing deferred-acceptance procedure. Initially, all users are unmatched and maintain preference lists of IRSs ordered by their expected data rates as described in phase 2 of Algorithm~\ref{algo:1}. In each round, every unmatched user proposes to the most preferred IRS. Each IRS tentatively accepts the proposal that yields the highest data rate among its current suitor and any existing tentative match and rejects the others. Rejected users remove that IRS from their lists and may propose to their next choice in subsequent rounds. The process repeats until no further proposals are possible, i.e., all users are matched or have exhausted their lists, yielding a stable matching under the given preferences as shown in phase 2 of Algorithm~\ref{algo:1}.
\begin{Theorem}\label{T2}
Consider Algorithm \ref{algo:1} with $K$ users and $L$ IRSs. Counting one iteration as a single proposal event, the proposed algorithm terminates after at most $K\times L$ iterations; hence, it runs in polynomial time $O(K\times L)$.
\end{Theorem}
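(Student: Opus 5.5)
The plan is the standard potential-function argument for user-proposing deferred acceptance, specialized to Algorithm~\ref{algo:1}. Each user $\user_k$ starts Phase~2 with a preference list $[\vec{\Lambda}]_{\user_k,\IRS_l}$ of length $L$, obtained from the sorting step of Phase~1. The key invariant I would establish first is \emph{monotonicity of proposals}: a user never proposes twice to the same IRS. This follows from the rule that a rejected user (either rejected outright or displaced by a suitor with $R_{l,k'}>R_{l,k}$) deletes that IRS from its list and moves on to the next entry. Thus every proposal event permanently consumes one entry of one user's list. I would make this precise by defining a pointer $\pi_k\in\{0,1,\ldots,L\}$ for each user, equal to the number of IRSs $\user_k$ has already proposed to. A proposal by $\user_k$ increments $\pi_k$ by exactly one, and no step of the algorithm ever decrements it.

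Next I would introduce the potential $\Phi=\sum_{k=1}^{K}\pi_k$. It satisfies $0\le\Phi\le K L$, and it increases by exactly one per iteration, where an iteration is one proposal event as in the statement. Hence the number of iterations is at most $KL$. I would then check that the while-loop condition in Algorithm~\ref{algo:1} is compatible with this bound. The loop continues only while some user is unmatched and has not been rejected by all IRSs, i.e.\ some unmatched $\user_k$ has $\pi_k<L$. Once $\Phi$ cannot grow further, every unmatched user has $\pi_k=L$, so the loop exits. Termination therefore happens after at most $KL$ proposals. A slightly sharper remark, that an IRS once matched stays matched, so at most $K$ users ever remain unmatched, is not needed for the stated bound.

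Finally I would account for the cost per iteration to justify the $O(K\times L)$ running time. Each proposal requires reading the next entry of a sorted list, which is $O(1)$ via the pointer. The IRS then compares $R_{l,k'}$ with the rate of its current tentative partner, also $O(1)$ since the rates are stored in $[\vec{R}]_{l,k}$. Updating $\vec{\Upsilon}$ is likewise $O(1)$. Multiplying gives $O(KL)$ for Phase~2.

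The main obstacle is that the pseudocode as written loops over all $\IRS_l\in L$ inside each proposal. Read literally, that makes each proposal cost $O(L)$ rather than $O(1)$. I would handle this by interpreting that inner loop as acting only on the IRS receiving the proposal, which matches the narrative description of deferred acceptance, and by stating that the iteration count (proposal events) is what the theorem bounds. I would also note that Phase~1 preference construction and sorting adds $O(KL\log L + KL\log K)$. This is polynomial but sits outside the counted proposal iterations, consistent with the statement's convention of counting proposal events.
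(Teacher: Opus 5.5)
Your proposal is correct and follows essentially the same argument as the paper: no user ever proposes to the same IRS twice, so each proposal event uses up one of at most $K\times L$ user--IRS pairs, and the algorithm must stop within $K\times L$ iterations. Your potential $\Phi=\sum_k \pi_k$ and the $O(1)$-per-proposal cost accounting make explicit what the paper's proof states informally.
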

\begin{IEEEproof}
Each unmatched user proposes only to IRSs it has not yet tried. No user ever proposes to the same IRS twice. Therefore, the total number of distinct proposals across all users is at most $K\times L$. Upon each proposal, the recipient IRS either is free and tentatively accepts the user or is engaged and compares the new proposer with its current tentatively matched user, keeping the preferred one and rejecting the other, as shown in phase 2 of Algorithm~\ref{algo:1}. In either case, the set of untried IRSs for the proposing user strictly shrinks. Because the number of possible proposals is finite ($\le K\times L$), the process must terminate after at most $K\times L$ iterations. At this point, either all users are matched or all proposal lists are exhausted. Hence, the algorithm runs in $O(K\times L)$ time.
\end{IEEEproof}
\section{Simulation Results and Discussion}\label{sim}
We consider a network with multiple users and a single multi-antenna AP deployed over the network area. Several IRSs are installed, and each IRS comprises $100\times100$ reflecting elements with half-wavelength element spacing ($M_y=M_z= \lambda/2$). The simulation parameters are summarized in Table~\ref{tab:sim}. All algorithms are implemented in MATLAB. Unless otherwise stated, each plotted point is averaged over $10^6$ independent channel realizations. We evaluate the proposed method against three benchmarks. First, the ES baseline enumerates all feasible one-to-one user–IRS assignments and selects the assignment that maximizes the network sum rate. Second, the GS-based allocation allows each user to select the RS with the highest data rate. However, IRSs receiving multiple proposals are selected randomly. Third, the RS baseline randomly selects the user-IRS association matrix. 
\begin{table}[!t]
\centering
\renewcommand{\arraystretch}{1}
\caption{Simulation parameters}
\label{tab:sim}
\begin{tabular}{l l}
\hline
\textbf{Parameters}                                  & \textbf{Values} \\ \hline
Carrier frequency ($f_c$)  {[}GHz{]}                   & $15$              \\
Number of antennas at AP& $256$\\
AP power budget {[}dBm{]}                    & $43.2$            \\ 
Power density of noise {[}dBm/Hz{]} & $-174$           \\ 
Number of reflectors,                               & $100\times100$            \\ 
Channel bandwidth {[}MHz{]}                     & $400$ \cite{karaman2025demand}           \\ 

Noise figure {[}dB{]}                           & $10$ 
\\ 
Side length of reflectors ${[}m{]}$  & ${\lambda}/{2}$\\ \hline 
\end{tabular}
\end{table}
Fig.~\ref{sum_rate_power} shows the sum rate versus transmit power. As expected, all schemes exhibit a monotonically increasing trend. The proposed JBUA achieves performance within $2\%$ of the ES benchmark while consistently outperforming the GS and RS baselines across the entire power range, yielding approximately $14\%$ and $44\%$ higher sum rates than GS and RS, respectively.
\begin{figure}[!htp]
     \centering
\includegraphics[width=\linewidth]{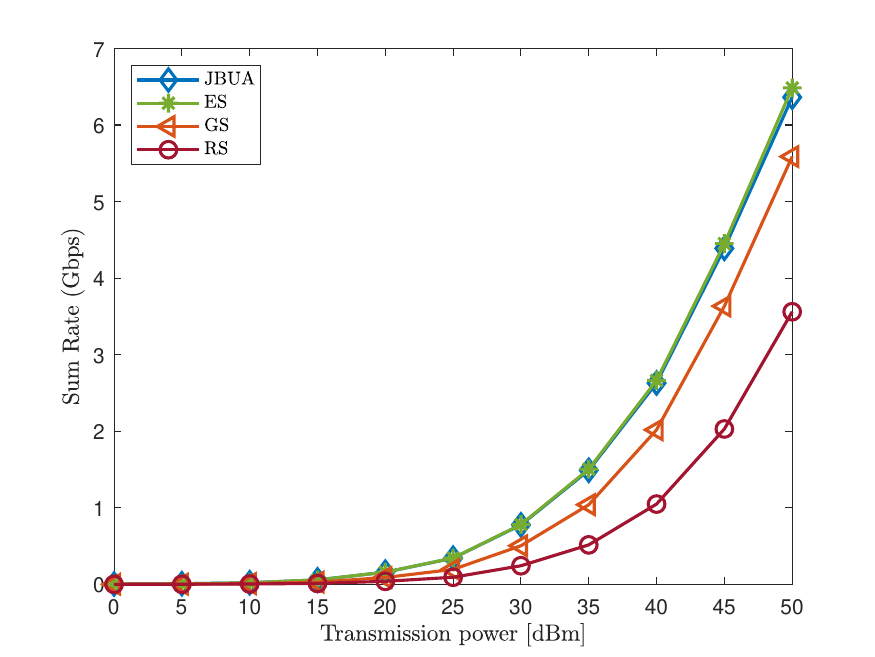}
    \caption{Sum rate versus AP power budget.
     }
     \label{sum_rate_power}   
\end{figure}
\begin{figure}[!htp]
     \centering
\includegraphics[width=\linewidth]{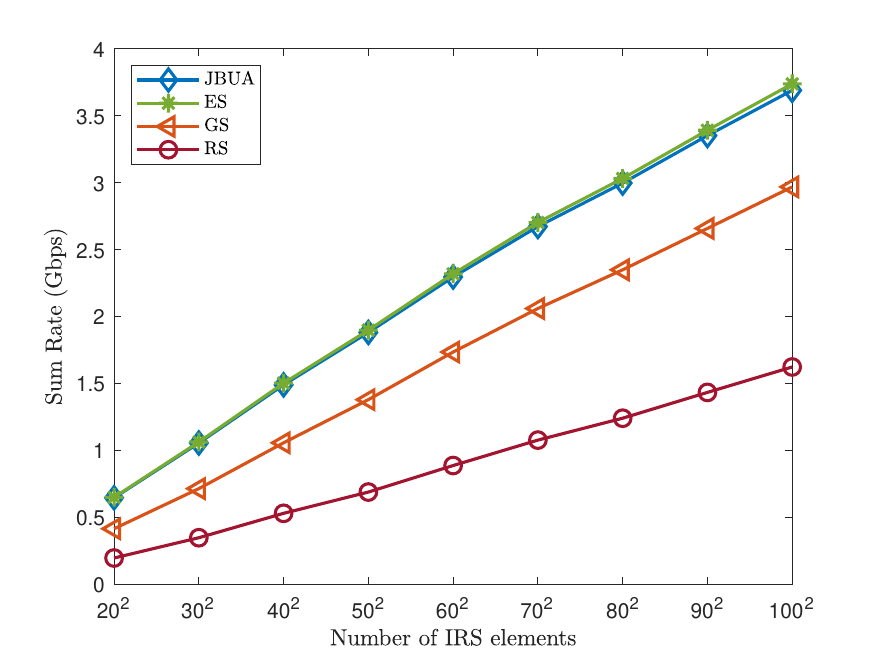}
    \caption{Sum rate versus the number of IRS elements.
     }
     \label{sum_rate_reflectors}   
\end{figure}

Fig.~\ref{sum_rate_reflectors} shows the sum rate versus the number of reflectors. As expected, all schemes improve monotonically with increasing reflector count. Across the entire range, the proposed JBUA closely approaches the ES benchmark while consistently outperforming the GS and RS baselines. More specifically, for IRSs with $100\times100$ elements, JBUA achieves approximately $22\%$ and $56\%$ higher sum rates than GS and RS, respectively. Fig.~\ref{sum_rate_network} shows the network sum rate versus network size. The sum rate decreases as the coverage area grows due to larger AP-user distances and the resulting path loss. Across all sizes, JBUA remains close to ES and consistently outperforms GS and RS
\begin{figure}[!htp]
     \centering
\includegraphics[width=\linewidth]{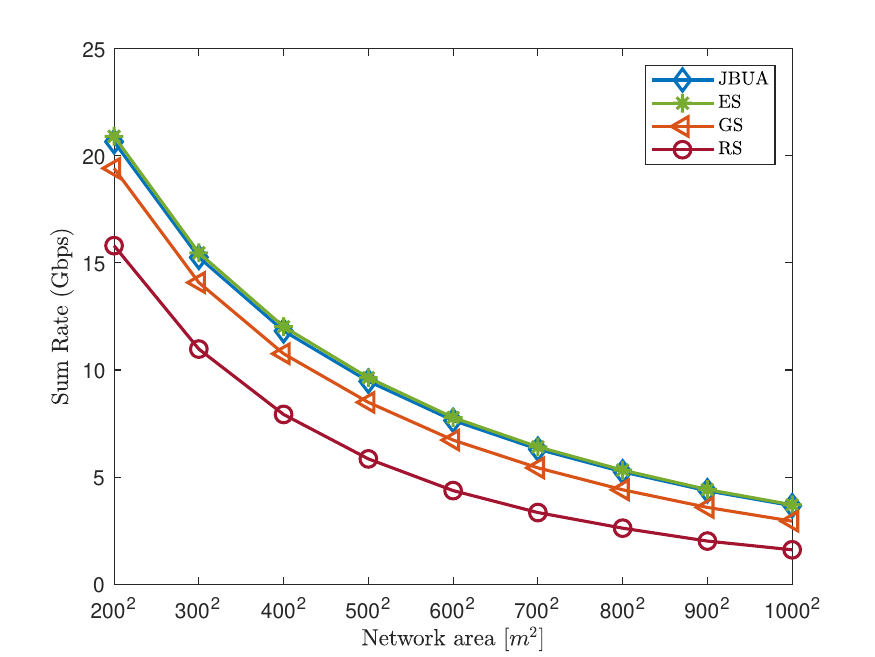}
    \caption{Sum rate versus network area.
     }
     \label{sum_rate_network}   
\end{figure}
\section{Conclusions}\label{conc}
This paper studied robust resource allocation for futuristic 6G networks operating in the FR3. We proposed a multi-layer IRS-assisted communication framework that mitigates severe LoS blockages via a two-tier structure comprising TIRSs and AIRS. To maximize network performance, we formulated the JBUA problem and developed a stable-matching-based algorithm that selects user-IRS pairs according to achievable data rates. Numerical results demonstrate that the proposed approach consistently improves the downlink sum rate and enhances IRS utilization compared with baseline schemes. In particular, the JBUA solution approaches the performance of exhaustive search while retaining substantially lower complexity, thereby validating the effectiveness of the proposed multi-layer IRS architecture for reliable 6G deployments in FR3. This research can be expanded by using a LEO satellite segment to improve the adaptability, coverage continuity, and scalability of the hierarchical IRS design for future 6G deployments. In addition, we will investigate hybrid beamforming at the AP and across IRSs during user mobility, accounting for tracking, Doppler, and handover to better mimic real-world operating conditions.
\bibliographystyle{IEEEtran}
\balance
\bibliography{References}

\end{document}